\newcommand{\SC}[2][]{{\color{blue!80!cyan}#2}}
\newcommand{\TW}[1]{{\color{red!80!blue}#1}}
\newcommand{\JK}[1]{{\color{green!50!black}#1}}
\newcommand\SC*[1]{\marginpar{\color{blue!80!cyan}#1}} 
\newcommand\TW*[1]{\marginpar{\color{red!80!blue}#1}} 
\newcommand\JK*[1]{\marginpar{\color{green!50!black}#1}} 
\newcommand{\ie}{\textit{i.e.}\xspace} 
\newcommand{\eg}{\textit{e.g.}\xspace}
\newcommand{\viz}{\textit{viz.}\xspace}
\mathchardef\mhyphen="2D 
\newcommand{\rel}[1]{\mathrel{#1}}
\newcommand{\N}{\ensuremath{\mathbb{N}}\xspace}
\newcommand{\R}{\ensuremath{\rel{R}}}
\newcommand{\game}{\ensuremath{\mathcal{G}}\xspace}
\newcommand{\even}{\ensuremath{0}\xspace}
\newcommand{\odd}{\ensuremath{1}\xspace}
\newcommand{\player}{\ensuremath{\mathit{i}}\xspace}
\newcommand{\priority}{\ensuremath{\Omega}\xspace}
\newcommand{\getplayername}{\ensuremath{\mathcal{P}}}
\newcommand{\getplayer}[1]{\getplayername(#1)}
\newcommand{\prefixes}[3][]{\ensuremath{\Pi_{#2}^{#1}(#3)}}
\newcommand{\consistency}{\ensuremath{\Vdash}}
\newcommand{\consistent}[2]{\ensuremath{{#2}\consistency{#1}}}
\newcommand{\pathprefix}[3]{\ensuremath{#1_{#2}^{#3}}}
\newcommand{\To}{\ensuremath{\Longrightarrow}}
\newcommand{\mimick}{\mathsf{mimick}_{\phi,v}}
\newcommand{\winner}{\ensuremath{\sim_w}}
\renewcommand{\sb}{\sim}
\newcommand{\divr}[2]{\mathsf{div}_{#1}(#2)}
\newcommand{\stut}{\eqsim}
\newcommand{\entry}[2]{\mathsf{reach}_{#1}(#2)}
\newcommand{\vertexorder}{\ensuremath{\sqsubset}}
\newcommand{\targetorder}[1]{\ensuremath{\prec_{#1}}}
\newcommand{\vertexordermin}{\sqcap}
\newcommand{\targetordermin}[1]{\curlywedge_{#1}}
\newcommand{\pathconcat}{\cdot}
\newcommand{\targetclass}[2]{\mathsf{targetclass}_{#1}(#2)}
\newcommand{\target}[2]{\tau_{#1}(#2)}
\newcommand{\dom}[1]{\mathsf{dom}(#1)}
\begin{document}

\tikzstyle{even}=[shape=diamond,inner sep=1pt, minimum size=8pt]
\tikzstyle{odd}=[shape=rectangle,inner sep=1pt, minimum size=8pt]  

\title{Stuttering Equivalence for Parity Games}
\author{Sjoerd Cranen \and Jeroen J.A. Keiren \and Tim A.C. Willemse}
\institute{Department of Mathematics and Computer Science,\\
    Technische Universiteit Eindhoven, \\
    P.O. Box 513, 5600 MB Eindhoven, The Netherlands}

\maketitle
\begin{abstract} We study the process theoretic notion of stuttering
equivalence in the setting of parity games. We demonstrate
that stuttering equivalent vertices have the same winner
in the parity game. This means that solving a parity game can be
accelerated by minimising the game graph with respect to stuttering
equivalence. While, at the outset, it might not be clear that this
strategy should pay off, our experiments using typical verification
problems illustrate that stuttering equivalence speeds up solving
parity games in many cases.

\end{abstract}

\section{Introduction}

Parity games~\cite{EJ:91,McN:93,Zie:98} are played by two players (called
\emph{even} and \emph{odd}) on a directed graph in which vertices have
been assigned \emph{priorities}.  Every vertex in the graph belongs to
exactly one of these two players. The game is played by moving a token
along the edges in the graph indefinitely; the edge that is moved along
is chosen by the player owning the vertex on which the token currently
resides. Priorities that appear infinitely often along such infinite
plays then determine the winner of the play.

Solving a parity game essentially boils down to computing the set of 
vertices that, if the token is initially placed on a vertex in this 
set, allows player \emph{even} (resp. \emph{odd}) to win. This problem 
is known to be in UP$\,\cap\,$co-UP, a result due to 
Jurdzi\'nski~\cite{Jur:98}; it is still an open problem whether there
is a polynomial time algorithm for the problem, but even in case
such an algorithm is found, it may not be the most efficient algorithm
in practice.

Parity games play a crucial role in verification; the model checking
problem for the modal $\mu$-calculus can be reduced to the problem of
solving a given parity game. It is therefore worthwile to investigate
methods by which these games can be solved efficiently in practice.
In~\cite{FL:09}, Friedman and Lange describe a meta-algorithm that,
combined with a set of heuristics, appears to have a positive impact
on the time required to solve parity games.  Fritz and Wilke consider
more-or-less tried and tested techniques for \emph{minimising} parity
games using novel refinement and equivalence relations, see~\cite{FW:06}.
The delayed simulation they introduce, and its induced equivalence
relation, however, are problematic for quotienting, which is why they go
on to define two variations of delayed simulations that do not suffer
from this problem. As stated in~\cite{Wil:05}, however, ``Experiments
indicate that simplifying parity games using our approach before solving
them is not faster than solving them outright in practice''.

Despite the somewhat unsatisfactory performance of the delayed simulation
in practice, we follow a methodology similar to the one pursued by
Fritz and Wilke. As a basis for our investigations, we consider 
\emph{stuttering equivalence}~\cite{BCG:88}, which originated in the 
setting of Kripke Structures. Stuttering equivalence has two
qualities that make it an interesting candidate for minimising parity
games. Firstly, vertices with the same player and priority are only 
distinguished on the basis of their future branching behaviour,
allowing for a considerable compression. Secondly, stuttering equivalence
has a very attractive worst-case time complexity of $\mathcal{O}(n \cdot
m)$, for $n$ vertices and $m$ edges, which is in stark contrast to the
far less favourable time complexity required for delayed simulation,
which is $\mathcal{O}(n^3\cdot m \cdot d^2)$, where $d$ is the
number of different priorities in the game.  In addition to these,
stuttering equivalence has several other traits that make it appealing:
quotienting is straightforward, distributed algorithms for computing
stuttering equivalence have been developed (see \eg~\cite{BO:03}),
and it admits efficient, scalable implementations using BDD technology \cite{WHH+:06}.

On the basis of the above qualities, stuttering equivalence is likely to
significantly compress parity games that stem from typical model checking
problems.  Such games often have a rather limited number of priorities
(typically at most three), and appear to have regular structures. We note
that, as far as we have been able to trace, quotienting parity games
using stuttering equivalence has never been shown to be sound. Thus,
our contributions in this paper are twofold.

First, we show that stuttering equivalent vertices are won by the same
player in the parity game.  As a side result, given a winning strategy
for a player for a particular vertex, we obtain winning strategies for
all stuttering equivalent vertices.  This is of particular interest
in case one is seeking an explanation for the solution of the game,
for instance as a means for diagnosing a failed verification.

Second, we experimentally show that computing and subsequently
solving the stuttering quotient of a parity game is in many cases
\emph{faster} than solving the original game.  In our comparison,
we included several competitive implementations of algorithms for
solving parity game, including several implementations of \emph{Small
Progress Measures}~\cite{Jur:00} and McNaughton's \emph{recursive}
algorithm~\cite{McN:93}.  Moreover, we also compare it to quotienting
using \emph{strong bisimulation}~\cite{Par:81}.  While we do not claim
that stuttering equivalence minimisation should always be performed
prior to solving a parity game, we are optimistic about its effects in
practical verification tasks.

\paragraph{Structure.} The remainder of this paper is organised
as follows. Section~\ref{sec:preliminaries} briefly introduces the
necessary background for parity games. In Section~\ref{sec:equivalences}
we define both strong bisimilarity and stuttering equivalence
in the setting of parity games; we show that both can be used for
minimising parity games. Section~\ref{sec:experiments} is devoted to
describing our experiments, demonstrating the efficacy of stuttering
equivalence minimisation on a large set of verification problems.
In Section~\ref{sec:conclusions}, we briefly discuss future work and
open issues.

\section{Preliminaries}
\label{sec:preliminaries}

We assume the reader has some familiarity with
parity games; therefore, the  main purpose of this section is to fix
terminology and notation. For an in-depth treatment of these games,
we refer to~\cite{McN:93,Zie:98}.

\subsection{Parity Games}
A parity game is a game played by players \emph{even} (represented by
the symbol $\even$) and \emph{odd} (represented by the symbol $\odd$). 
It is played on a total finite directed graph, the vertices of which can be won
by either $\even$ or $\odd$. The objective of the game is to find the 
partitioning that separates the vertices won by $\even$ from those won by $\odd$.
In the following text we formalise this definition, and we introduce some
concepts that will make it easier to reason about parity games.

\begin{definition}
A parity game $\game$ is a directed graph $(V, \to, \priority, \getplayername)$,
where 
\begin{itemize}
\item $V$ is a finite set of vertices,
\item $\to \subseteq V \times V$ is a total edge relation (\ie, for each
$v\in V$ there is at least one $w \in V$ such that $(v, w) \in \to$),

\item $\Omega: V \to \N$ is a priority function that assigns priorities to vertices,

\item $\getplayername: V \to \{\even, \odd\}$ is a function assigning vertices
to players.

\end{itemize}
\end{definition}
Instead of $(v, w) \in \to$ we will usually write $v \to w$.  Note that,
for the purpose of readability later in this text, our definition
deviates from the conventional definition: instead of requiring a
partitioning of the vertices $V$ in vertices owned by player even 
and player odd, we achieve the same
through the function $\getplayername$. 

\paragraph{Paths.} A sequence of vertices $v_1, \ldots, v_n$ for which
$v_i \to v_{i+1}$ for all $1 \leq i < n$ is called a \emph{path}, and
may be denoted using angular brackets: $\path{v_1, \ldots, v_n}$. The
concatenation $p \pathconcat q$ of paths $p$ and $q$ is again a path. We
use $p_n$ to denote the $n^\textrm{th}$ vertex in a path $p$. The set of
paths of length $n$, for $n \ge 1$ starting in a vertex $v$ is defined
inductively as follows.
\begin{align*}
\prefixes[1]{}{v} & = \{ \path{v} \} \\
\prefixes[n+1]{}{v} & = \{ \path{v_1,\ldots,v_n,v_{n+1}}
~|~ \path{v_1,\ldots,v_n} \in \prefixes[n]{}{v} \land v_n \to v_{n+1} \}
\end{align*}
We use $\prefixes[\omega]{}{v}$ to denote the set of infinite paths
starting in $v$. The set of all paths starting in $v$, both finite and
infinite is defined as follows:
\begin{align*}
\prefixes{}{v} & = \prefixes[\omega]{}{v} \cup \bigcup_{n \in \N} \prefixes[n]{}{v} \\
\end{align*}
\paragraph{Winner.} A game starting in a vertex $v \in V$ is played by
placing a token on $v$, and then moving the token along the edges in the
graph. Moves are taken indefinitely according to the following simple
rule: if the token is on some vertex $v$, player $\getplayer{v}$ moves
the token to some vertex $w$ such that $v \to w$. The result is an
infinite path $p$ in the game graph.  The \emph{parity} of the lowest
priority that occurs infinitely often on $p$ defines the \emph{winner}
of the path. If this priority is even, then player $\even$ wins, otherwise
player $\odd$ wins.

\paragraph{Strategies.} A \emph{strategy} for player $\player$ is
a partial function $\phi : V^{*} \to V$, that for each path ending in a
vertex owned by player $\player$ determines the next vertex to be played
onto. A path $p$ of length $n$ is \emph{consistent} with a strategy $\phi$
for player $\player$, denoted $\consistent{p}{\phi}$, if and only if
for all $1 \leq j < n$ it is the case that $\path{p_1,\ldots,p_j}
\in \dom{\phi}$ and $\getplayer{p_j} = \player$ imply $p_{j+1} =
\phi(\path{p_1,\ldots,p_j})$. The definition of consistency is extended
to infinite paths in the obvious manner. We denote the set of paths that 
are consistent with a given strategy $\phi$, starting in a vertex $v$ by
$\prefixes{\phi}{v}$; formally, we define:
\begin{align*}
\prefixes{\phi}{v} & = \{ p \in \prefixes{}{v} ~|~ \consistent{p}{\phi} \}
\end{align*}
A strategy $\phi$ for player $\player$ is said
to be a \emph{winning strategy} from a vertex $v$ if and only if $\player$
is the winner of every path that starts in $v$ and that is consistent
with $\phi$. It is known from the literature that each vertex in the game
is won by exactly one player; effectively, this induces a partitioning
on the set of vertices $V$ in those vertices won by player $\even$ and
those vertices won by player $\odd$. 

\paragraph{Orderings.} We assume that $V$ is ordered by an arbitrary,
total ordering $\vertexorder$. The minimal element of a non-empty
set $U\subseteq V$ with respect to this ordering is denoted
$\vertexordermin(U)$.  Let $|v,u|$ denote the least number of edges
required to move from vertex $v$ to vertex $u$ in the graph. We define
$|v,u|=\infty$ if $u$ is unreachable from $v$. For each vertex $u \in V$,
we define an ordering $\targetorder{u} \subseteq V \times V$ on vertices,
that intuitively orders vertices based on their proximity to $u$, with a
subjugate role for the vertex ordering $\vertexorder$:
$$
v \targetorder{u} v' \textrm{ iff } |v,u| < |v',u|
\textrm{ or } (|v,u| = |v',u| \textrm{ and } v \vertexorder v')
$$
Observe that $u \targetorder{u} v$ for all $v \not= u$. The minimal
element of $U \subseteq V$ with respect to $\targetorder{u}$ is written
$\targetordermin{u}(U)$.

\section{Strong Bisimilarity and Stuttering Equivalence}
\label{sec:equivalences}

Process theory studies refinement and equivalence
relations, characterising the differences between models of systems that
are observable to entities with different observational powers. Most equivalence
relations have been studied for their computational complexity, giving
rise to effective procedures for deciding these equivalences.  Prominent
equivalences are \emph{strong bisimilarity}, due to Park~\cite{Par:81}
and \emph{stuttering equivalence}~\cite{BCG:88}, proposed by Browne,
Clarke and Grumberg.

Game graphs share many of the traits of the system models studied
in process theory. As such, it is natural to study refinement
and equivalence relations for such graphs, see \eg, \emph{delayed
simulation}~\cite{FW:06}. In the remainder of this section, we recast the
bisimilarity and stuttering equivalence to the setting
of parity games, and show that these are finer than \emph{winner equivalence},
which we define as follows.

\begin{definition} Let $\game = (V, \to, \priority, \getplayername)$ be
a parity game. Two vertices $v, v' \in V$ are said to
be \emph{winner equivalent}, denoted $v \winner v'$ iff
$v$ and $v'$ are won by the same player.
\end{definition}

Because every vertex is won by exactly one player (see \eg~\cite{Zie:98}),
winner equivalence partitions $V$ into a subset won by player $\even$
and a subset won by player $\odd$. Clearly, winner equivalence is therefore 
an equivalence relation on the set of vertices of a given parity game.  
The problem of deciding winner equivalence, is in UP$\,\cap\,$co-UP, 
see~\cite{Jur:98}; all currently known algorithms require time 
exponential in the number of priorities in the game.

We next define strong bisimilarity for parity games; basically, we
interpret the priority function and the partitioning of vertices in players
as state labellings.

\begin{definition}
Let $\game = (V,\to,\priority,\getplayername)$ be a parity game. A
symmetric relation $\R \subseteq V \times V$ is a \emph{strong bisimulation}
relation if $v \R v'$ implies
\begin{itemize}
\item $\priority(v) = \priority(v')$ and $\getplayer{v} = \getplayer{v'}$;
\item for all $w \in V$ such that $v \to w$, there should be a 
$w' \in V$ such that $v' \to w'$ and $w \R w'$.
\end{itemize}
Vertices $v$ and $v'$ are said to be \emph{strongly bisimilar},
denoted $v \sb v'$, iff a strong bisimulation relation $\R$ exists
such that $v \R v'$.

\end{definition}
Strong bisimilarity is an equivalence relation on the vertices
of a parity game; quotienting with respect to strong bisimilarity is
straightforward. It is not hard to show that strong bisimilarity is
strictly finer than winner equivalence.  Moreover, quotienting can be
done effectively with a worst-case time complexity of $\mathcal{O}(|V|
\log |V|)$.  

Strong bisimilarity quotienting prior to solving a parity game can in some
cases be quite competitive. One of the drawbacks of strong bisimilarity,
however, is its sensitivity to counting (in the sense that it will not
identify vertices that require a different number of steps to reach a
next equivalence class), preventing it from compressing the game graph
any further.

Stuttering equivalence shares many of the characteristics
of strong bisimilarity, and deciding it has only a slightly worse
worst-case time complexity. However, it is insensitive to counting,
and is therefore likely to lead to greater reductions. Given these
observations, we hypothesise (and validate this hypothesis in
Section~\ref{sec:experiments}) that stuttering equivalence
outperforms strong bisimilarity and, in most instances, reduces the time
required for deciding winner equivalence in parity games stemming from
verification problems.

We first introduce stuttering bisimilarity~\cite{dNV:95}, a coinductive
alternative to the stuttering equivalence of Browne, Clarke and Grumberg;
we shall use the terms stuttering bisimilarity and stuttering equivalence
interchangeably. The remainder of this section is then devoted to showing
that stuttering bisimilarity is coarser than strong bisimilarity, but
still finer than winner equivalence. The latter result allows one to
pre-process a parity game by quotienting it using stuttering equivalence.


\begin{definition} Let $\game = (V, \to, \priority, \getplayername)$ be
a parity game. Let $\R{} \subseteq V \times V$.
An infinite path $p$ is $\R$-divergent, denoted $\divr{\R}{p}$ iff
$p_1 \R p_i$ for all $i$. Vertex $v \in V$ allows for
divergence, denoted $\divr{\R}{v}$ iff there is a path $p$ such
that $p_1 = v$ and $\divr{\R}{p}$.
\end{definition}
We generalise the transition relation $\to$ to its reflexive-transitive
closure, denoted $\Longrightarrow$, taking a given relation $\R$ on
vertices into account. The generalised transition relation is used
to define stuttering bisimilarity.
Let $\game = (V, \to, \priority, \getplayername)$ be a parity game
and let 
$\R{} \subseteq V \times V$ be a relation on its vertices. Formally, we
define the relations $\to_R \subseteq V \times V$ and $\To_R \subseteq V \times V$ through the following set of deduction rules.
$$
\frac{v \to w \qquad v \R w}{v \to_R w} \qquad
\frac{}{v \To_R v} \qquad
\frac{v \to_R w \qquad w \To_R v'}{v \To_R v'}
$$
We extend this notation to paths: we sometimes write
$\path{v_1,\ldots,v_n} \to u$ if $v_n \to u$; similarly, we write 
$\path{v_1,\ldots,v_n} \to_R u$ and $\path{v_1,\ldots,v_n} \To_R u$.

\begin{definition}\label{def:stut}
  Let $\game = (V,\to,\priority,\getplayername)$ be a parity game. Let $\R \subseteq V
  \times V$ be a symmetric relation on vertices; $\R$ is a \emph{stuttering 
  bisimulation} if $v \R v'$ implies
  \begin{itemize}
    \item $\priority(v) = \priority(v')$ and $\getplayer{v} = \getplayer{v'}$;
    \item $\divr{\R}{v}$ iff $\divr{\R}{v'}$;
    \item If $v \to u$, then either ($v \R u \land u \R v'$), or
          there are $u',w$, such that $v' \To_R w \to u'$ and $v \R w$ and $u
          \R u'$;
  \end{itemize}
  Two states $v$ and $v'$ are said to be \emph{stuttering bisimilar},
  denoted $v \stut
  v'$ iff there is a stuttering bisimulation relation $\R$, such that $v \R v'$.
\end{definition}
Note that stuttering bisimilarity is the largest stuttering
bisimulation.  Moreover, stuttering bisimilarity is an equivalence relation,
see~\eg~\cite{dNV:95,BCG:88}. In addition, quotienting with respect to
stuttering bisimilarity is straightforward.

Stuttering bisimilarity between vertices extends naturally to finite
paths.  Paths of length 1 are equivalent if the vertices they consist of
are equivalent.  If paths $p$ and $q$ are equivalent, then $p \pathconcat
\path{v} \stut q$ iff $v$ is equivalent to the last vertex in $q$
(and analogously for extensions of $q$), and $p \pathconcat \path{v}
\stut q \pathconcat \path{w}$ iff $v \stut w$.  An infinite path $p$ is
equivalent to a (possibly infinite) path $q$ if for all finite prefixes
of $p$ there is an equivalent prefix of $q$ and \emph{vice versa}.

\medskip

We next set out to prove that stuttering bisimilarity is finer than
winner equivalence. Our proof strategy is as follows: given that there is
a strategy $\phi$ for player $\player$ from a vertex $v$, we define a 
strategy for player $\player$ that from vertices equivalent to $v$ schedules 
only paths that are stuttering bisimilar to a path starting in $v$ that is 
consistent with $\phi$.

If after a number of moves a path $p$ has been played, and our strategy
has to choose the next move, then it needs to know which successors for $p$ 
will yield a path for which again there is a stuttering bisimilar path that
is consistent with $\phi$. To this end we introduce the set $\entry{\phi,v}{p}$.

Let $\phi$ be an arbitrary strategy, $v$ an arbitrary vertex owned by the 
player for which $\phi$ defines the strategy, and let $p$ be an arbitrary 
path. We define $\entry{\phi,v}{p}$ as the set of vertices in new classes, 
reachable by traversing $\phi$-consistent paths that start in $v$ and that 
are stuttering bisimilar to $p$.
$$
\entry{\phi,v}{p} =
\{
  u \in V ~|~ \exists q \in \prefixes{\phi}{v}: p \stut q
  \land \consistent{q \pathconcat \path{u}}{\phi} \land q \pathconcat \path{u} \not\stut
  q
\}
$$
Observe that not all vertices in $\entry{\phi,v}{p}$ have to be in the same
equivalence class, because it is not guaranteed that all paths $q \in 
\prefixes{\phi}{v}$, stuttering bisimilar to $p$, are extended by $\phi$ 
towards the same equivalence class.

Suppose the set $\entry{\phi,v}{p}$ is non-empty; in this case, our strategy
should select a \emph{target class} to which $p$ should be extended. Because
stuttering bisimilar vertices can reach the same classes, it does not matter
which class present in $\entry{\phi,v}{p}$ is selected as the target class.
We do however need to make a unique choice; to this end we use the total
ordering $\vertexorder$ on vertices.
$$
\targetclass{\phi,v}{p} = \{ u \in V ~|~ u \stut \vertexordermin(\entry{\phi,v}{p}) \}
$$
Not all vertices in the target class need be reachable from $p$, but there
must exist at least one vertex that is.
We next determine a \emph{target vertex}, by selecting a unique, reachable
vertex from the target class. This
target of $p$, given a strategy $\phi$ and a vertex $v$ is denoted $\target{\phi,v}{p}$; note that
the ordering $\vertexorder$ is again used to uniquely determine a vertex from the set
of reachable vertices.
$$
\target{\phi,v}{p} =
\vertexordermin \{ u \in \targetclass{\phi,v}{p} ~|~ \exists w \in V: p \To_{\stut} w \to u \}
$$
\begin{definition}
We define a strategy $\mimick$ for player $i$ that, given some strategy $\phi$ for player $i$ and a vertex $v$, 
allows only paths to be scheduled that have a stuttering bisimilar
path starting in $v$ that is scheduled by $\phi$. It is defined as follows.
  $$
  \begin{array}{c}
  \mimick(p) = \left\{ 
    \begin{array}{ll}
     \targetordermin{t}\{u \in V ~|~ p \to_{\stut} u \}, 
       \qquad & 
       \begin{array}{l}
         t=\target{\phi,v}{p} \\
         p \not\to \target{\phi,v}{p} \\
         \entry{\phi,v}{p} \not= \emptyset
       \end{array} \medskip \\ 
         \target{\phi,v}{p}
       \qquad &
       \begin{array}{l}
        p \to \target{\phi,v}{p}\\
        \entry{\phi,v}{p} \not= \emptyset 
       \end{array} \medskip \\
     \vertexordermin\{u \in V \,|\, p \rightarrow_{\stut} u \}, & 
       \begin{array}{l}
         \entry{\phi,v}{p} = \emptyset
       \end{array} \medskip \\
    \end{array}
  \right.
  \end{array}
  $$
\end{definition}

\renewcommand{\pathprefix}[3]{\vec{#1}}
\newcommand{\slice}[3]{\pathprefix{#1}{}{}_{#2} \ldots \pathprefix{#1}{}{}_{#3}}

\begin{lemma}
\label{lem:strategy_for_stut}
Let $\phi$ be a strategy for player $i$ in an arbitrary parity game. Assume that $v, w \in V$ and $v \stut w$, and let $\psi=\mimick$. Then
$$
  \forall l \in \N:
  \forall p \in \prefixes[l+1]{\psi}{w}:~
  \exists k \in \N:
  \exists q \in \prefixes[k]{\phi}{v}:~
  p \stut q
$$
\end{lemma}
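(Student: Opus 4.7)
The plan is to induct on $l$. The base case $l=0$ is immediate: take $q=\path{v}$, and $p\stut q$ follows from $v\stut w$. For the inductive step, given $p\in\prefixes[l+2]{\psi}{w}$, write $p'$ for its length-$(l+1)$ prefix. The induction hypothesis supplies $q\in\prefixes[k]{\phi}{v}$ with $p'\stut q$; in particular, its last vertex $q_k$ satisfies $q_k\stut p_{l+1}$. I would then analyse the step $p_{l+1}\to p_{l+2}$.

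If $p_{l+1}\stut p_{l+2}$, then $p\stut p'$ and the same $q$ is the witness. Otherwise the move leaves the $\stut$-class of $p_{l+1}$, and I would split on the owner of $p_{l+1}$. When $p_{l+1}$ belongs to player $i$, inspection of $\mimick$ shows that only its middle branch can produce a non-stuttering move (the other two branches select vertices $u$ with $p'\to_\stut u$, which stays inside the class), whence $p_{l+2}=\target{\phi,v}{p'}$ and $\entry{\phi,v}{p'}\neq\emptyset$. Unfolding $\target{\phi,v}{p'}$ and $\targetclass{\phi,v}{p'}$ inside the definition of $\entry{\phi,v}{p'}$ then hands us a $\phi$-consistent $q_0\stut p'$ together with a vertex $u^*\stut p_{l+2}$ such that $q_0\pathconcat\path{u^*}$ is $\phi$-consistent and bisimilar to $p$, so this extended path is the witness.

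The remaining, and principal, obstacle is when $p_{l+1}$ is owned by $i$'s opponent and the adversary jumps to $p_{l+2}$. Applying the stuttering-bisimulation clause to $q_k\stut p_{l+1}$ and $p_{l+1}\to p_{l+2}$ yields $w$ and $u'$ with $q_k\To_\stut w\to u'$, $p_{l+1}\stut w$ and $p_{l+2}\stut u'$, but the witness path $q_k\To_\stut w$ is only guaranteed to exist in the underlying graph and may not be $\phi$-consistent, since $\phi$ could prescribe a class-escaping move at some player $i$ vertex along it. To handle this I would reduce the adversary case to the shape of the player-$i$ case by establishing the auxiliary claim that $\entry{\phi,v}{p'}$ intersects the $\stut$-class of $p_{l+2}$; given such a witness vertex $u^*$, the extension $q_0\pathconcat\path{u^*}$ is built exactly as above. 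Proving the claim is the technical heart of the argument; a natural route is an inner induction inside the $\stut$-class of $q_k$, using that the first move out of $q_k$ is the adversary's (hence free), re-invoking the bisimulation clause each time $\phi$ diverts us to a different vertex of the class, and relying on finiteness of $V$ together with divergence preservation under $\stut$ to rule out infinite detours.
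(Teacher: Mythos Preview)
Your induction scheme, base case, stuttering case, and player-$i$ case all match the paper's argument. The divergence is in the opponent case, where you flag as the ``principal obstacle'' that the stuttering path $q_k \To_\stut w$ might pass through player-$i$ vertices at which $\phi$ prescribes a class-escaping move. This obstacle does not exist: by Definition~\ref{def:stut}, stuttering bisimilarity preserves $\getplayername$, and every vertex along a $\To_\stut$-chain lies in the same $\stut$-class as $q_k$. Since $q_k \stut p_{l+1}$ and $\getplayer{p_{l+1}} \neq i$, \emph{every} vertex on the path $q_k \To_\stut w$ is owned by the opponent, so $\phi$ imposes no constraint there and $q \pathconcat r \pathconcat \path{u'}$ is automatically $\phi$-consistent. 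The paper dispatches this case in two lines using exactly this observation.

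Your proposed workaround---proving that $\entry{\phi,v}{p'}$ meets the class of $p_{l+2}$ via an inner induction with divergence and finiteness---is therefore unnecessary. It would also, as sketched, never actually branch: the situation ``$\phi$ diverts us to a different vertex of the class'' cannot arise, for the same reason. So the detour is harmless rather than wrong, but it obscures the one-line argument and leaves the reader checking a termination claim that is never exercised. Once you use the owner-preservation fact, the opponent case becomes the \emph{easier} of the two, not the harder one.
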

\begin{proof}
  We proceed by induction on $l$. For $l = 0$, the desired implication follows 
  immediately. 
  For $l=n+1$, assume that we have a path  $p \in \prefixes[n+1]{\psi}{w}$. 
  Clearly, $\path{p_1,\ldots,p_n}$ is also consistent with $\psi$. The induction hypothesis yields us a $q\in\prefixes[k]{\phi}{v}$ for some $k\in\N$ such that $\path{p_1,\ldots,p_n} \stut q$. Let $q$
  be such. We distinguish the following cases:
  \begin{enumerate}
    \item $p_n \stut p_{n+1}$. In this case, clearly $p
    \stut \path{p_1,\ldots,p_n} \stut q$, which finishes this
    case.
    
    \item $p_n \not\stut p_{n+1}$. We again distinguish two cases:
    \begin{enumerate}
      \item Case $\getplayer{p_n} \neq i$. Since
      $p_n \stut q_k$, we find that there must be states $u, w \in V$ such that
      $q_k \To_{\stut} w \to u$ and $p_{n+1} \stut u$. So there must be a path
      $r$ and vertex $u$ such that $p \stut q \pathconcat r \pathconcat \path{u}$, for which
      we know that $r \stut q_k$. Therefore, all vertices in $r$ are owned by
      $\getplayer{q_k}=\getplayer{p_n}$, so $\phi$ is not defined for the extensions of 
      $q$ along $\pi$. We can therefore conclude that $\consistent{q \pathconcat r \pathconcat \path{u}}{\phi}$.

      \item Case $\getplayer{p_n} = i$. Then it must be the case that
      $p_{n+1} = \target{\phi,v}{\path{p_1, \ldots, p_n}}$. By definition, that 
      means that there is a $\phi$-consistent path $r \in
      \prefixes{\phi}{v}$, such that $r \stut p$. \qed
    \end{enumerate}
  \end{enumerate} 
\end{proof}
In the following
lemma we extend the above obtained result to infinite paths.

\begin{lemma}\label{lem:strategy_for_stut_infinite}
Let $\phi$ be a strategy for player $i$ in an arbitrary parity game. Assume that $v, w \in V$ and $v \stut w$, and let $\psi=\mimick$. Then
$$
  \forall p \in \prefixes[\omega]{\psi}{w}:
  \exists q \in \prefixes[\omega]{\phi}{v}:
  p \stut q.
$$
\end{lemma}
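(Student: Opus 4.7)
The plan is to lift Lemma~\ref{lem:strategy_for_stut} from finite prefixes of $p$ to the full infinite path $p$ by a König-style extraction from the supplied finite witnesses, combined with a case analysis on whether $p$ traverses infinitely or only finitely many $\stut$-classes.

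First I would gather witnesses. For each $n \in \N$, Lemma~\ref{lem:strategy_for_stut} produces a finite $\phi$-consistent path $q^{(n)} \in \prefixes{\phi}{v}$ with $\path{p_1,\ldots,p_n} \stut q^{(n)}$. Let $T$ be the collection of all finite $\phi$-consistent paths from $v$ that are $\stut$-equivalent to some finite prefix of $p$. By the structural definition of $\stut$ on paths (dropping the last vertex of a witness still yields a witness for a possibly shorter $p$-prefix), $T$ is prefix-closed, and finiteness of the game graph makes $T$ finitely branching. If $p$ visits infinitely many $\stut$-classes, then every $q^{(n)}$ must visit at least as many classes as $\path{p_1,\ldots,p_n}$, so the lengths of witnesses grow without bound and $T$ has arbitrarily deep nodes; König's lemma then yields an infinite $\phi$-consistent branch $q \in \prefixes[\omega]{\phi}{v}$. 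To conclude $p \stut q$ one checks both prefix conditions: every prefix of $q$ lies in $T$ and is therefore $\stut$-equivalent to some prefix of $p$, and conversely each $\path{p_1,\ldots,p_n}$ is matched by a sufficiently long prefix of $q$ obtained via the witnesses $q^{(n)}$.

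The main obstacle is the remaining case, in which $p$ visits only finitely many $\stut$-classes and thus admits a divergent tail $\path{p_N,p_{N+1},\ldots}$ entirely inside a single class $C$. Applying Lemma~\ref{lem:strategy_for_stut} to $\path{p_1,\ldots,p_N}$ provides a finite $\phi$-consistent path $q^{\star}$ whose endpoint $u$ satisfies $u \stut p_N$, and the divergence clause of Definition~\ref{def:stut} then promotes $\stut$-divergence from $p_N$ to $u$. The remaining difficulty is upgrading this graph-level divergence into a $\phi$-consistent infinite extension of $q^{\star}$ that stays in $C$: Lemma~\ref{lem:strategy_for_stut} applied to ever longer prefixes $\path{p_1,\ldots,p_{N+m}}$ yields $\phi$-consistent witnesses that all end in $C$, and pigeonhole over the finite set of bounded-length witnesses combined with finiteness of the $\phi$-reachable portion of $C$ exhibits a $\phi$-consistent cycle inside $C$, whose unfolding concatenated with $q^{\star}$ delivers the required infinite path $q$.
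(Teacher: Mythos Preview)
Both halves of your argument have gaps, and the second one points to the key idea that your approach is missing.

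For the K\"onig extraction: the infinite branch $q$ you obtain need not satisfy $p \stut q$. Your tree $T$ may well contain infinite branches that stall inside an early $\stut$-class of $p$. For instance, if $v$ is owned by player $\neg i$ and has an in-class self-loop, then every finite prefix of $\path{v,v,v,\ldots}$ lies in $T$ (each is $\stut$-equivalent to $\path{p_1}$), so K\"onig may hand you precisely this branch even when $p$ changes class infinitely often. Your converse claim, that each $\path{p_1,\ldots,p_n}$ is matched by a prefix of $q$ ``obtained via the witnesses $q^{(n)}$'', conflates the witnesses $q^{(n)}$ with prefixes of the K\"onig branch $q$; these are unrelated in general. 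This is repairable (restrict to the subtree of nodes with extensions in $T$ realising arbitrarily many class-blocks), but as written it does not go through.

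The more serious gap is in the divergent-tail case when player $i$ owns the class $C$. Your pigeonhole argument is empty: inspecting the proof of Lemma~\ref{lem:strategy_for_stut}, whenever $p_n \stut p_{n+1}$ the witness for $\path{p_1,\ldots,p_{n+1}}$ is reused unchanged from that for $\path{p_1,\ldots,p_n}$, so all of your $q^{(N+m)}$ coincide with $q^\star$ and no cycle is exhibited. Nothing you have written excludes $\phi(q^\star) \notin C$. The paper closes this gap not by a compactness argument but by exploiting the concrete definition of $\mimick$: if $\entry{\phi,v}{\path{p_1,\ldots,p_n}}$ were non-empty while $p$ lingers in $C$, then a target $t$ exists and by construction $\psi$ selects successors that are strictly $\targetorder{t}$-smaller at each step, forcing $p$ to reach $t$ (and hence leave $C$) within $|V|$ steps. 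Since $p$ in fact stays in $C$ for more than $|V|$ steps, $\entry{\phi,v}{\path{p_1,\ldots,p_n}} = \emptyset$; by definition this means every $\phi$-consistent one-step extension of every witness remains in $C$, and the desired infinite $\phi$-consistent extension of $q^\star$ follows immediately. This appeal to the $\targetorder{t}$-ordering wired into $\mimick$ is the idea your proposal lacks.
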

\begin{proof}
  Suppose we have an infinite path $p \in \prefixes[\omega]{\psi}{w}$. Using lemma \ref{lem:strategy_for_stut} we can obtain a path $q$ starting in $v$ that is stuttering bisimilar, and that is consistent with $\phi$. The lemma does not guarantee, however, that $q$ is of infinite length. We show that if $q$ is finite, it can always be extended to an infinite path that is still consistent with $\phi$.

  Notice that paths can be partitioned into subsequences of vertices
  from the same equivalence class, and that two stuttering bisimilar
  paths must have the same number of partitions.

  Suppose now that $q$ is of finite length, say $k+1$. Then $p$ must
  contain such a partition that has infinite size. In particular, there
  must be some $n\in\N$ such that $p_{n+j} \stut p_{n+j+1}$ for all $0
  \le j \le |V|$. We distinguish two cases.

  \begin{enumerate}
    \item $\getplayer{p_n} = i$. We show that then also $\entry{\phi,v}{\path{p_0, p_1, \ldots p_n}}
    = \emptyset$. Suppose this is not the case. Then we find that for some $u \in V$, 
    $u = \target{\phi,v}{\pi}$ exists, and therefore $p_{n+j} \targetorder{u} 
    p_{n+j+1}$ for all $j \le |V|$. Since $\targetorder{u}$ is total, this means that 
    the longest chain is of length $|V|$, which contradicts our assumptions. So, 
    necessarily $\entry{\phi,v}{\path{p_0, p_1, \ldots p_n}} = \emptyset$, meaning that no
    path that is consistent with $\phi$ leaves the class of $p_n$. But this means that
    the infinite path that stays in the class of $p_n$ is also consistent with $\phi$.

    \item $\getplayer{p_n} \neq i$. Since $p_n \stut q_k$, also $\getplayer{q_k} \neq i$.
    Since $p_n \stut p_{n+j}$ for all $j \le |V| +1$, this means that
    there is a state $u$, such that $u = p_{n+l} = p_{n+l'}$.  But this
    means that $u$ is divergent. Since $\getplayer{u} \neq i$, and $u \stut q_k$,
    we find that also $q_k$ is divergent. Therefore, there is an infinite
    path with prefix $q$ that is consistent with $\phi$ and that
    is stuttering bisimilar to $p$. \qed
  \end{enumerate}
\end{proof}

\begin{theorem} 
Stuttering bisimilarity is strictly finer than winner equivalence,
\ie, $\stut \subseteq \winner$.
\end{theorem}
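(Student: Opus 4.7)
The plan is to use Lemma~\ref{lem:strategy_for_stut_infinite} as the workhorse: if player $i$ has a winning strategy $\phi$ from $v$ and $v \stut w$, I would argue that the derived strategy $\psi = \mimick$ is a winning strategy for $i$ from $w$. The crux is then to connect stuttering bisimilarity of infinite paths to having the same winner, and to check that every $\psi$-consistent infinite play from $w$ is indeed won by $i$.

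First I would establish the auxiliary fact that stuttering bisimilar infinite paths induce the same winner. Because $v \stut v'$ implies $\priority(v) = \priority(v')$, vertices within one $\stut$-equivalence class all carry a single priority, call it $\priority(C)$ for class $C$. If $p \stut q$ are two infinite paths, then by the pathwise extension of $\stut$ described before Lemma~\ref{lem:strategy_for_stut}, they have matching finite prefixes up to stuttering, and hence visit the same sequence of equivalence classes (after contracting consecutive repetitions). Two cases arise: either both paths eventually stay in a single class $C$ forever (so the only priority occurring infinitely often on either path is $\priority(C)$), or both paths change class infinitely often, in which case the classes visited infinitely often coincide and so do the priorities seen infinitely often. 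Either way, the minimum priority seen infinitely often along $p$ equals that along $q$, so $p$ and $q$ are won by the same player.

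Next, for $\stut \subseteq \winner$, suppose $v \stut w$ and let $\phi$ be a winning strategy for player $i$ from $v$; set $\psi = \mimick$. Pick any infinite $\psi$-consistent path $p$ starting in $w$. By Lemma~\ref{lem:strategy_for_stut_infinite} there is an infinite $\phi$-consistent path $q$ starting in $v$ with $p \stut q$. Since $\phi$ is winning, $q$ is won by $i$, and by the auxiliary fact $p$ is also won by $i$. As $p$ was arbitrary, $\psi$ is winning for $i$ from $w$. Therefore $w$ is won by the same player as $v$, establishing $\stut \subseteq \winner$.

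Finally, strictness is witnessed by a trivial example: take two vertices $v_0, v_1$ with priorities $0$ and $2$ respectively, each owned by player $\even$ and with a self-loop. Both are won by $\even$, so $v_0 \winner v_1$, but they carry different priorities, so $v_0 \not\stut v_1$. The main obstacle I anticipate is a careful statement of the auxiliary fact about infinite paths: one has to handle the case in which the two paths stutter within classes at different rates, and in particular rule out that a class visited finitely often on one side contributes priorities infinitely often on the other. This is where the uniformity of priorities on $\stut$-classes, together with the matching of prefixes, does the work.
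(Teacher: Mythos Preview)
Your proposal is correct and follows essentially the same approach as the paper: invoke Lemma~\ref{lem:strategy_for_stut_infinite} to transport a winning strategy from $v$ to $w$, use that stuttering bisimilar infinite paths have the same infinitely occurring priorities, and note strictness. The paper's proof is a three-line sketch that leaves both the auxiliary fact about priorities and the strictness witness implicit; you have simply spelled these out (and your two-vertex example for strictness is a perfectly good concrete instance of what the paper calls ``immediate'').
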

\begin{proof}
  The claim follows immediately from lemma \ref{lem:strategy_for_stut_infinite} and the
  fact that two stuttering bisimilar infinite paths have the same infinitely occurring
  priorities. Strictness is immediate. \qed
\end{proof}
Note that strong bisimilarity is strictly finer than stuttering
bisimilarity; as a result, it immediately follows that strong bisimilarity
is finer than winner equivalence, too.

As an aside, we point out that our proof of the above theorem relies
on the construction of the strategy $\mimick{}$; its purpose, however,
exceeds that of the proof. If, by solving the stuttering bisimilar
quotient of a given parity game $\game$, one obtains a winning strategy
$\phi$ for a given player, $\mimick{}$ defines the winning strategies
for that player in $\game$. This is of particular importance in case an
explanation of the solution of the game is required, for instance when the game
encodes a verification problem for which a strategy helps explain the
outcome of the verification (see \eg~\cite{SS:98}). It is not immediately
obvious how a similar feature could be obtained in the setting of, say,
the delayed simulations of Fritz and Wilke~\cite{FW:06}, because vertices
that belong to different players and that have different priorities can be
identified through such simulations.

\section{Experiments}
\label{sec:experiments}

We next study the effect that stuttering equivalence minimisation has in a 
practical setting. We do this by solving parity games that originate from three
different sources (we will explain more later) using three
different methods: direct solving, solving after bisimulation reduction and
solving after stuttering equivalence reduction. Parity games
are solved using a number of different algorithms, \viz a naive C++ implementation of the
\emph{small progress measures} algorithm due to Jurzi{\'n}ski, and the optimized
and unoptimized variants that are implemented in the PGSolver tool \cite{FL:09} 
of the small progress measures algorithm, the recursive algorithm due to 
McNaughton \cite{McN:93}, the bigstep algorithm due to Schewe \cite{Sch:07}
and a strategy improvement algorithm due to V{\"o}ge \cite{VJ:00}.
We compare the time needed by these methods to solve the parity games, and
we compare the sizes of the parity games that are sent to the solving algorithms.

To efficiently compute bisimulation and stuttering equivalence for parity
games we ad{\color{red}a}pted a single-threaded implementation of the corresponding
reduction algorithms by Blom and Orzan~\cite{BO:03} for labelled
transition systems.

All experiments were conducted on a machine consisting of 28 
Intel\textregistered{} Xeon\textregistered{} E5520 Processors running at 
2.27GHz, with 1TB of shared main memory, running a 64-bit Linux distribution 
using kernel version 2.6.27. None of our experiments employ multi-core features.

\subsection{Test sets}

The parity games that were used originate from three different sources. Our
main interest is in the practical implications of stuttering equivalence 
reduction on solving model checking problems, so a number of typical model
checking problems have been selected and encoded into parity games. We describe each of
these problems in a little bit more detail.
\begin{description}
\item[IEEE1394] Five properties of the Firewire Link-Layer protocol (1394) 
  \cite{Lut:97} were considered, as they are described in \cite{SM:98}. They are
  numbered I--V in the order in which they can be found in that document.
\item[Lift] Four properties are checked on the specification of a lift in
  \cite{GPW:03}; a liveness property (I), a property that expresses the absence
  of deadlock (II) and two safety properties (III and IV). These typical model
  checking properties are expressed as alternation-free $\mu$-calculus formulae.
\item[SWP] On a model of the sliding window protocol \cite{BFGPP:05}, a 
  fairness property (I) and a safety property (II) are verified, as well as 7
  other fairness, liveness and safety properties.
\end{description}
Note that some of the properties are described by alternation free
$\mu$-calculus formulae, whereas others alternating.

The second test set was taken from \cite{FL:09} and consists of several 
instances of the elevator problem and the Hanoi towers problem described in
that paper. For the latter, a different encoding was devised and added
to the test set.

Lastly, a number of equivalence checking problems was encoded into parity games
as described in \cite{CPPW:07}.

The parity games induced by the alternation free $\mu$-calculus formulae
have different numbers of priorities, but the priorities along the
paths in the parity games are ascending. In contrast, the paths in the
parity games induced by alternating properties have no such property
and are therefore computationally more challenging. Note that the parity games
generated for these problems only have limited alternations between vertices
owned by player $\even$ and $\odd$ in the paths of the parity games.

The problems taken from \cite{FL:09}, as well as some of the equivalence
checking problems, give rise to parity games with alternations between both
players and priorities.

\subsection{Results}

To analyse the performance of stuttering equivalence reduction, we measured the
number of vertices and the number of edges in the original parity games, the
bisimulation-reduced parity games and the stuttering-reduced parity games. The
results for the IEEE1394, Lift and SWP problems are shown in Table~\ref{tab:sizes}.
For the Elevator model from \cite{FL:09}, the results are shown
in Table~\ref{tab:sizes_pgsolver}.

Figure \ref{fig:comparison}.a compares these sizes graphically; each plot point
represents a parity game, of which the position along the $y$-axis is determined
by its stuttering-reduced size, and the position along the $x$-axis by its 
original size and its bisimulation-reduced size, respectively. The plotted sizes
are the sum of the number of vertices and the number of edges.

In addition to these results, we measured the time needed to reduce and to solve
the parity games. The time needed to solve a parity game using stuttering 
equivalence reduction is computed as the time needed to reduce the parity game
using stuttering equivalence, plus the time needed by \emph{fastest} of the 
solving algorithms to solve the reduced game. A similar measure was recorded for
solving parity games using bisimulation reduction. Also, the time needed to 
solve these games directly was measured. The results are plotted in figure
\ref{fig:comparison}.b. Again, every data point is a parity game, of which the
solving times determine the position in the scatter plot.

\begin{table}[!ht]
  \centering
  \vspace{1em}
  \caption{Statistics for the parity games for the 1394, Lift and SWP
  experiments. In the Lift case, $N$ denotes the number of distributed
  lifts; in the case of SWP, $N$ denotes the size of the window. The
  number of priorities in the original (and minimised)
  parity games is listed under Priorities.}
  \label{tab:sizes}  
  \setlength{\tabcolsep}{3.5pt}
\begin{tabular}{lll||rr|rr|rr}\\
\textbf{IEEE 1394} & & &
\multicolumn{2}{c}{\textbf{original}} & \multicolumn{2}{c}{$\stut$} & \multicolumn{2}{c}{$\sb$} \\
\hline
\hline
& & & \\
\textbf{Property} & \textbf{Priorities} &  &
\textbf{$|V|$} & \textbf{$|{\to}|$~} & \textbf{$|V|$} & \textbf{$|{\to}|$~} & \textbf{$|V|$} & \textbf{$|{\to}|$} \\
\hline 
 I & 1 &  & 346\,173 & 722\,422 & 1 & 1 & 1 & 1 \\ \hline 
 II & 1 &  & 377\,027 & 679\,157 & 3\,730 & 3\,086 & 5\,990 & 11\,180 \\ \hline 
 III & 4 &  & 1\,179\,770 & 1\,983\,185 & 102 & 334 & 13\,551 & 22\,166 \\ \hline 
 IV & 2 &  & 524\,968 & 875\,296 & 4 & 6 & 10\,814 & 17\,590 \\ \hline 
 V & 1 &  & 1\,295\,249 & 2\,150\,590 & 1 & 1 & 1 & 1  \\ \hline 

\\
\\
\textbf{Lift} & & &
\multicolumn{2}{c}{\textbf{original}} & \multicolumn{2}{c}{$\stut$} & \multicolumn{2}{c}{$\sb$} \\
\hline
\hline
& & & \\
\textbf{Property} & \textbf{Priorities} & \textbf{N} &
\textbf{$|V|$} & \textbf{$|{\to}|$~} & \textbf{$|V|$} & \textbf{$|{\to}|$~} & \textbf{$|V|$} & \textbf{$|{\to}|$} \\
\hline 
I & 2 & 4 & 1\,691 & 4\,825 & 22 & 58 & 333 & 1\,021\\ \hline 
I & 3 & 4 & 63\,907 & 240\,612 & 131 & 450 & 5\,148 & 23\,703 \\ \hline 
I & 4 & 4 & 1\,997\,579 & 9\,752\,561 & 929 & 4\,006 & 74\,059 & 462\,713 \\ \hline
II & 2 & 2 & 846 & 2\,172 & 5 & 9 & 94 & 240 \\ \hline 
II & 2 & 3 & 31\,954 & 121\,625 & 16 & 39 & 1\,092 & 4\,514 \\ \hline 
II & 2 & 4 & 998\,790 & 5\,412\,890 & 64 & 193 & 14\,353 & 80\,043 \\ \hline 
III & 1 & 2 & 763 & 1\,903 & 1 & 1 & 1 & 1 \\ \hline 
III & 1 & 3 & 26\,996 & 99\,348 & 1 & 1 & 1 & 1 \\ \hline 
III & 1 & 4 & 788\,879 & 4\,146\,139 & 1 & 1 & 1 & 1 \\ \hline 
IV & 2 & 2 & 486 & 1\,126 & 4 & 6 & 151 & 396 \\ \hline 
IV & 2 & 3 & 11\,977 & 39\,577 & 5 & 9 & 1\,741 & 6\,951 \\ \hline 
IV & 2 & 4 & 267\,378 & 1\,257\,302 & 7 & 15 & 23\,526 & 122\,230 \\ \hline 

\\
\\
\textbf{SWP} & & &
\multicolumn{2}{c}{\textbf{original}} & \multicolumn{2}{c}{$\stut$} & \multicolumn{2}{c}{$\sb$} \\
\hline
\hline
& & & \\
\textbf{Property} & \textbf{Priorities} & \textbf{N} &
\textbf{$|V|$} & \textbf{$|{\to}|$~} & \textbf{$|V|$} & \textbf{$|{\to}|$~} & \textbf{$|V|$} & \textbf{$|{\to}|$} \\
\hline 

I &3 & 1 &   1\,250 & 3\,391 & 4 & 7 & 314 & 849 \\ \hline 
I &3 & 2 &   14\,882 & 47\,387 & 4 & 7 & 1\,322 & 4\,127 \\ \hline
I &3 & 3 &   84\,866 & 291\,879 & 4 & 7 & 4\,190 & 14\,153 \\ \hline 
I &3 & 4 &   346\,562 & 1\,246\,803 & 4 & 7 & 11\,414 & 40\,557 \\ \hline 
II&2 & 1  &  1\,370 & 4\,714 & 5 & 8 & 90 & 316 \\ \hline  
II&2 & 2  &  54\,322 & 203\,914 & 5 & 8 & 848 & 3\,789 \\ \hline 
II&2 & 3  &  944\,090 & 3\,685\,946 & 5 & 8 & 5\,704 & 28\,606 \\ \hline 
II&2 & 4  &  11\,488\,274 & 45\,840\,722 & 5 & 8 & 34\,359 & 183\,895 \\
\hline
\end{tabular}

\end{table}

\begin{table}[!ht]
\caption{Statistics for the parity games for the FIFO and LIFO Elevator models
taken from \cite{FL:09}. \textbf{Floors} indicates the number of floors.}
\label{tab:sizes_pgsolver}
\vspace{-1em}
\setlength{\tabcolsep}{3.5pt}
\begin{tabular}{lll||rr|rr|rr}\\
\multicolumn{3}{l}{\textbf{Elevator Models}}   &
\multicolumn{2}{c}{\textbf{original}} & \multicolumn{2}{c}{$\stut$} & \multicolumn{2}{c}{$\sb$} \\
\hline
\hline
& & & \\
\textbf{Model} & \textbf{Floors} & \textbf{Priorities} & 
\textbf{$|V|$} & \textbf{$|{\to}|$~} & \textbf{$|V|$} & \textbf{$|{\to}|$~} & \textbf{$|V|$} & \textbf{$|{\to}|$} \\
\hline 
FIFO & 3 & 3 & 564 & 950 & 351 & 661 & 403 & 713 \\ \hline
FIFO & 4 & 3 & 2\,688 & 4\,544 & 1\,588 & 2\,988 & 1\,823 & 3\,223 \\\hline
FIFO & 5 & 3 & 15\,684 & 26\,354 & 9\,077 & 16\,989 & 10\,423 & 18\,335 
\\\hline
FIFO & 6 & 3 &  108\,336 & 180\,898 & 62\,280 & 116\,044 & 71\,563 & 125\,327 \\\hline
FIFO & 7 & 3 & 861\,780 & 1\,431\,610 & 495\,061 & 919\,985 & 569\,203 & 994\,127 \\\hline
LIFO & 3 & 3 & 588 & 1\,096 & 326 & 695 & 363 & 732 \\\hline
LIFO & 4 & 3 & 2\,832 & 5\,924 & 866 & 2\,054 & 963 & 2\,151 \\\hline
LIFO & 5 & 3 & 16\,356 & 38\,194 & 2\,162 & 5\,609 & 2\,403 & 5\,850 \\\hline
LIFO & 6 & 3 & 111\,456 & 287\,964 & 5\,186 & 14\,540 & 5\,763 & 15\,117 \\\hline
LIFO & 7 & 3 & 876\,780 & 2\,484\,252 & 16\,706 & 51\,637 & 18\,563 &
53494 \\\hline
\end{tabular}
\end{table}

\begin{figure}[h!]
  \hspace{-0.5em}
  \begin{tabular}{cc}
    \multicolumn{2}{c}{\textbf{(a) Parity game sizes}}\medskip
  \\
    \begin{tikzpicture}[mark size=2.5pt,remember picture,scale=0.77,baseline]
      \begin{loglogaxis}[axis x line=bottom,
                         axis y line=right,
                         xmin=1,xmax=50000000,
                         ymin=1,ymax=5000000,
                         xlabel={Original},
                         ylabel={Stuttering}]
        \addplot[only marks, mark=+] plot file {size_modelchecking_os.txt};
        \addplot[only marks, mark=o] plot file {size_pgsolver_os.txt};
        \addplot[only marks, mark=triangle] plot file {size_equivalencechecking_os.txt};
        \addplot[dotted] plot coordinates {(1, 1) (50000000, 50000000)};
      \end{loglogaxis}
    \end{tikzpicture}
  & \hspace{-1.5em}
    \begin{tikzpicture}[mark size=2.5pt,remember picture,scale=0.77,baseline]
      \begin{loglogaxis}[axis x line=bottom,
                         axis y line=left,
                         xmin=1,xmax=5000000,
                         ymin=1,ymax=5000000,
                         xlabel={Bisimulation},
                         yticklabels={}]
        \addplot[only marks, mark=+] plot file {size_modelchecking_bs.txt};
        \addplot[only marks, mark=o] plot file {size_pgsolver_bs.txt};
        \addplot[only marks, mark=triangle] plot file {size_equivalencechecking_bs.txt};
        \addplot[dotted] plot coordinates {(1, 1) (5000000, 5000000)};
      \end{loglogaxis}
    \end{tikzpicture}
    \bigskip
  \\
    \multicolumn{2}{c}{\textbf{(b) Solving times}\medskip}
  \\
    \begin{tikzpicture}[mark size=2.5pt,remember picture,scale=0.77,baseline]
      \begin{loglogaxis}[axis x line=bottom,
                         axis y line=right,
                         xmin=0.01,xmax=1990,
                         ymin=0.01,ymax=1990,
                         xlabel={Original},
                         ylabel={Stuttering},
                         anchor=south]
        \addplot[only marks, mark=+] plot file {experiments_modelchecking_os.txt};
        \addplot[only marks, mark=o] plot file {experiments_pgsolver_os.txt};
        \addplot[only marks, mark=triangle] plot file {experiments_equivalencechecking_os.txt};
        \addplot[dotted] plot coordinates {(0.01, 0.01) (1990, 1990)};
      \end{loglogaxis}
    \end{tikzpicture}
  & \hspace{-1.5em}
    \begin{tikzpicture}[mark size=2.5pt,remember picture,scale=0.77,baseline]
      \tikzstyle{plot legend}=[
         rounded corners=2.5pt,inner xsep=3pt,inner ysep=2pt,
         draw=black!50,fill=white,
         font=\footnotesize,cells={anchor=center},
         nodes={inner sep=2pt,text depth=0.15em,rounded corners=0pt,right}]
      \pgfplotsset{every axis legend/.style={plot legend,at={(0.95,0.05)},above left}}
      \begin{loglogaxis}[axis x line=bottom,
                         axis y line=left,
                         xmin=0.01,xmax=1990,
                         ymin=0.01,ymax=1990,
                         xlabel={Bisimulation},
                         yticklabels={},
                         anchor=south]
        \addplot[only marks, mark=+] plot file {experiments_modelchecking_bs.txt};
        \addlegendentry{IEEE1394, Lift, SWP};
        \addplot[only marks, mark=o] plot file {experiments_pgsolver_bs.txt};
        \addlegendentry{Hanoi, Elevator};
        \addplot[only marks, mark=triangle] plot file {experiments_equivalencechecking_bs.txt};
        \addlegendentry{Equivalence checking};
        \addplot[dotted] plot coordinates {(0.01, 0.01) (1990, 1990)};
      \end{loglogaxis}
    \end{tikzpicture}
  \end{tabular}
  \caption{Sizes and solving times (in seconds) of the stuttering-reduced parity 
  games set out against sizes and solving times of the original games and of the 
  bisimulation-reduced games. The vertical axis is shared between the plots in
  each subfigure.}
  \label{fig:comparison}
\end{figure}
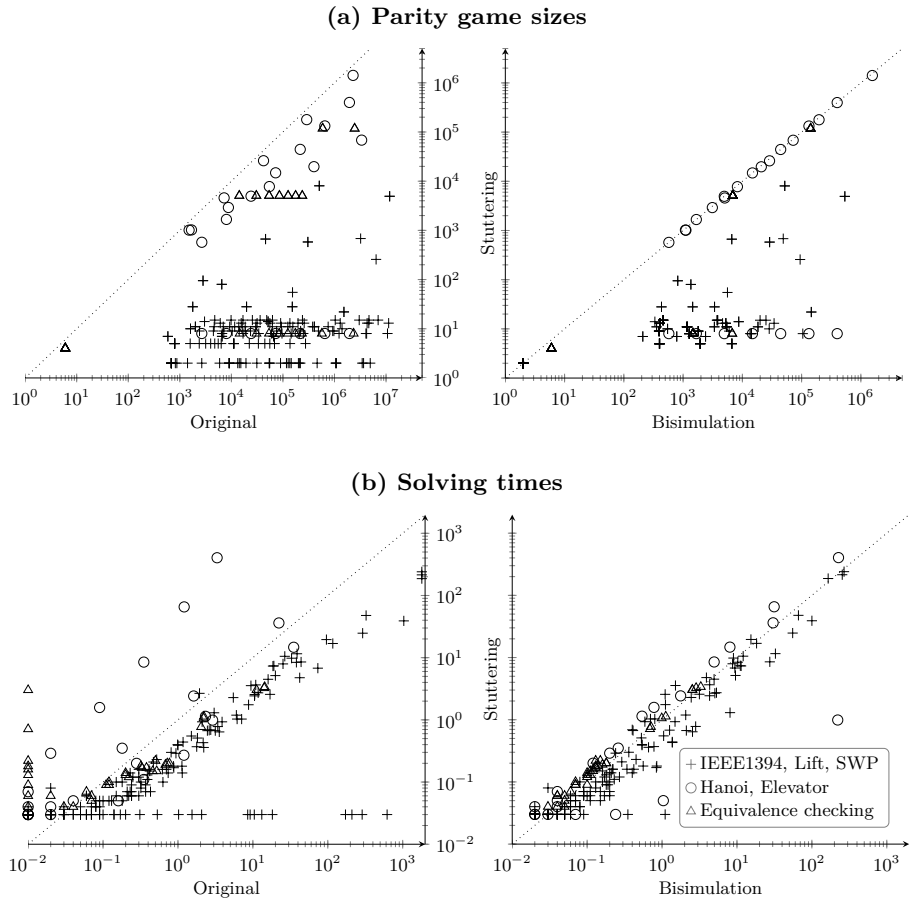

\subsection{Discussion}

At a glance, stuttering reduction seems a big improvement on bisimulation 
reduction in terms of size reduction. Figure \ref{fig:comparison}.a shows 
clearly that stuttering equivalence gives a better size reduction than 
bisimulation equivalence in the majority of cases. The difference is often 
somewhere between a factor ten and a factor thousand. Looking at solving times, 
the results also seem promising. In figure \ref{fig:comparison}.b we see that 
in most cases reducing the game and then solving it costs significantly less 
time. We will discuss the results in more detail for each test set separately.

\subsubsection{IEEE1394, Lift, SWP}

For these cases, we see that the size reduction is always better than that of
bisimulation reduction, unless bisimulation already compressed the parity game
to a single state. Solving times using stuttering equivalence are in general 
better than those of direct solving.

The experiments indicate
that minimising parity games using stuttering equivalence before solving
the reduced parity games is at least as fast as directly solving the
original games.

The second observation we make is that stuttering equivalence reduces the size
quite well for this test set, when compared to the other sets. This may be 
explained by the way in which the parity games were generated. As they encode
a $\mu$-calculus formula together with a state space, repetitive and 
deterministic parts of the state space are likely to generate fragments within
the parity game that can be easily compressed using stuttering reduction.

Lastly, we observe that solving times using bisimulation reduction are not
in general much worse than those using stuttering reduction. The explanation is
simple: both reductions compress the original parity game to such an extent that
the resulting game is small enough for the solvers to solve it in less than a 
tenth of a second.

\subsubsection{Equivalence checking}

The results for these experiments indicate that reduction using stuttering
equivalence sometimes performs poorly. The subset where performance is
especially poor is an encoding of branching bisimilarity, which gives rise to
parity games with alternations both between different priorities as well as
different players. As a result, little reduction is possible.

\subsubsection{Hanoi, Elevator}

Both stuttering equivalence and strong bisimulation reduction perform
poorly on a reachability property for the Hanoi towers experiment, with
the reduction times vastly exceeding the times required for solving the
parity games directly.  A closer inspection reveals that this is caused
by an unfortunate choice for a new priority for vertices induced by a
fixpoint-free subformula.  As a result, all paths in the parity game
have alternating priorities with very short stretches of the same priorities, 
because of which hardly any reduction is possible. We included an encoding of
the same problem which does not contain the unfortunate choice, and indeed
observe that in that case stuttering equivalence does speed up the solving
process.

The LIFO Elevator problem shows results similar to those of the other model 
checking problems. The performance with respect to the FIFO Elevator however is
rather poor. This seems to be due to three main factors: the relatively large 
number of alternating fixed point signs, the alternations between vertices owned
by player $\even$ and vertices owned by player $\odd$, and the low average 
branching degree in the parity game. This indicates that for alternating 
$\mu$-calculus formulae with nested conjunctive and disjunctive subformulae, 
stuttering equivalence reduction generally performs suboptimal. This should not 
come as a surprise, as stuttering equivalence only allows one to compress
sequences of vertices with equal priorities and owned by the same player.

\section{Conclusions}
\label{sec:conclusions}
We have adapted the notion of stuttering bisimilarity to the setting
of parity games, and proven that this equivalence relation can be
safely used to minimise a parity game before solving the reduced
game.

Experiments were conducted to investigate the effect of quotienting
stuttering bisimilarity on parity games originating from model checking
problems. In many practical cases this reduction leads to an improvement
in solving time, however in cases where the parity games involved have
many alternations between odd and even vertices, stuttering bisimilarity
reduction performs only marginally better than strong bisimilarity
reduction. Although we did compare our techniques against a number of
competitive parity game solvers, using other solving algorithms, or even other
implementations of the same algorithms, might give slightly different results,
also depending on heuristics that are implemented for \eg the small progress
measurees algorithm.

The fact that stuttering bisimilarity does not deal at all well with such
alternations leads us to believe that weaker notions of bisimilarity,
in which vertices with different players can be related under certain
circumstances, may resolve the most severe performance problems that
we saw in our experiments.  We regard the investigation of such weaker
relations as future work.

Stuttering bisimilarity has been previously studied in a distributed
setting~\cite{BO:03}. It would be interesting to compare its performance
to a distributed implementation of the known solving algorithms for
parity games. However, we are only aware of a multi-core implementation
of the \emph{Small Progress Measures} algorithm~\cite{PW:08}.

\bibliographystyle{plain}
\bibliography{lit}

\begin{thebibliography}{10}

\bibitem{BFGPP:05}
B.~Badban, W.~Fokkink, J.~F. Groote, J.~Pang, and J.~van~de Pol.
\newblock Verification of a sliding window protocol in $\mu$crl and pvs.
\newblock {\em Formal Aspects of Computing}, 17:342--388, 2005.
\newblock 10.1007/s00165-005-0070-0.

\bibitem{BO:03}
S.~Blom and S.~Orzan.
\newblock Distributed branching bisimulation reduction of state spaces.
\newblock {\em ENTCS}, 89(1), 2003.

\bibitem{BCG:88}
M.C. Browne, E.M. Clarke, and O.~Grumberg.
\newblock Characterizing finite {K}ripke structures in propositional temporal
  logic.
\newblock {\em Theor. Comput. Sci.}, 59:115--131, 1988.

\bibitem{CPPW:07}
T.~Chen, B.~Ploeger, J.~van~de Pol, and T.~A.~C. Willemse.
\newblock Equivalence checking for infinite systems using parameterized boolean
  equation systems.
\newblock In {\em CONCUR}, pages 120--135, 2007.

\bibitem{EJ:91}
E.A. Emerson and C.S. Jutla.
\newblock Tree automata, mu-calculus and determinacy.
\newblock In {\em SFCS'91}, pages 368--377, Washington, DC, USA, 1991. IEEE
  Computer Society.

\bibitem{FL:09}
O.~Friedmann and M.~Lange.
\newblock Solving parity games in practice.
\newblock In {\em ATVA}, volume 5799 of {\em LNCS}, pages 182--196. Springer,
  2009.

\bibitem{FW:06}
C.~Fritz and T.~Wilke.
\newblock Simulation relations for alternating parity automata and parity
  games.
\newblock In {\em DLT}, volume 4036 of {\em LNCS}, pages 59--70. Springer,
  2006.

\bibitem{GPW:03}
J.~F. Groote, J.~Pang, and A.~G. Wouters.
\newblock Analysis of a distributed system for lifting trucks.
\newblock In {\em J. Log. Algebr. Program.}, volume~55, pages 21--56. Elsevier,
  2003.

\bibitem{Jur:98}
M.~Jurdzi\'nski.
\newblock Deciding the winner in parity games is in {UP $\cap$ co-UP}.
\newblock {\em Inf. Process. Lett.}, 68(3):119--124, 1998.

\bibitem{Jur:00}
M.~Jurdzi{\'n}ski.
\newblock Small progress measures for solving parity games.
\newblock In {\em STACS'00}, LNCS, pages 290--301. Springer, 2000.

\bibitem{Lut:97}
S.P. Luttik.
\newblock Description and formal specification of the link layer of {P}1394.
\newblock In {\em Proc. of Workshop on Applied Formal Methods in System
  Design}, pages 43--56, 1997.

\bibitem{McN:93}
Robert McNaughton.
\newblock Infinite games played on finite graphs.
\newblock {\em Annals of Pure and Applied Logic}, 65(2):149--184, 1993.

\bibitem{dNV:95}
R.~De Nicola and F.W. Vaandrager.
\newblock Three logics for branching bisimulation.
\newblock {\em J. ACM}, 42(2):458--487, 1995.

\bibitem{Par:81}
D.~Park.
\newblock Concurrency and automata on infinite sequences.
\newblock {\em Theor. Comput. Sci}, 104:167--183, 1981.

\bibitem{Sch:07}
S.~Schewe.
\newblock Solving parity games in big steps.
\newblock In {\em Proceedings of FSTTCS 2007}, volume 4855 of {\em LNCS}, pages
  449--460. Springer, 2007.

\bibitem{SM:98}
M.~Sighireanu and R.~Mateescu.
\newblock Verification of the link layer protocol of the {IEEE}-1394 serial bus
  (firewire): An experiment with \textsc{e-lotos}.
\newblock {\em STTT}, 2(1):68--88, 1998.

\bibitem{SS:98}
P.~Stevens and C.~Stirling.
\newblock Practical model checking using games.
\newblock In {\em TACAS'98}, volume 1384 of {\em LNCS}, pages 85--101. Springer
  Verlag, 1998.

\bibitem{PW:08}
J.~van~de Pol and M.~Weber.
\newblock A multi-core solver for parity games.
\newblock {\em ENTCS}, 220(2):19--34, 2008.

\bibitem{VJ:00}
J.~V{\"o}ge and M.~Jurdzi{\'n}ski.
\newblock A discrete strategy improvement algorithm for solving parity games.
\newblock In {\em CAV}, pages 202--215, 2000.

\bibitem{Wil:05}
F.~Wilke.
\newblock {\em Simulation-Based Simplification of omega-Automata}.
\newblock PhD thesis, Christian-Albrechts-Universit{\"a}t zu Kiel, 2005.

\bibitem{WHH+:06}
R.~Wimmer, M.~Herbstritt, H.~Hermanns, K.~Strampp, and B.~Becker.
\newblock Sigref--- a symbolic bisimulation tool box.
\newblock In {\em ATVA}, volume 4218 of {\em LNCS}, pages 477--492. Springer,
  2006.

\bibitem{Zie:98}
W.~Zielonka.
\newblock Infinite games on finitely coloured graphs with applications to
  automata on infinite trees.
\newblock {\em Theor. Comp. Sci.}, 200(1-2):135 -- 183, 1998.

\end{thebibliography}

\end{document}